\newtheorem{thm}{Theorem}
\newtheorem{lem}[thm]{Lemma}
\newtheorem{prop}[thm]{Proposition}
\renewcommand*\@maketitle{%
  \normalfont\normalsize
  \@adminfootnotes
  \@mkboth{\@nx\shortauthors}{\@nx\shorttitle}%
  \global\topskip42\p@\relax % 5.5pc   "   "   "     "     "
  \@settitle
  \ifx\@empty\authors \else \@setauthors \fi
  \ifx\@empty\@date \else {\vskip 1em \vtop{\centering\large\@date\@@par}}\fi% MY CHANGE
  \ifx\@empty\@dedicatory
  \else
    \baselineskip18\p@
    \vtop{\centering{\footnotesize\itshape\@dedicatory\@@par}%
      \global\dimen@i\prevdepth}\prevdepth\dimen@i
  \fi
  \@setabstract
  \normalsize
  \if@titlepage
    \newpage
  \else
    \dimen@34\p@ \advance\dimen@-\baselineskip
    \vskip\dimen@\relax
  \fi
} % end \@maketitle
\renewcommand*\@adminfootnotes{%
  \let\@makefnmark\relax  \let\@thefnmark\relax
%  \ifx\@empty\@date\else \@footnotetext{\@setdate}\fi% MY CHANGE
  \ifx\@empty\@subjclass\else \@footnotetext{\@setsubjclass}\fi
  \ifx\@empty\@keywords\else \@footnotetext{\@setkeywords}\fi
  \ifx\@empty\thankses\else \@footnotetext{%
    \def\par{\let\par\@par}\@setthanks}%
  \fi
}
\begin{document}
\bibliographystyle{amsplain}

\title{Asymptotic bounds on the\\equilateral dimension of hypercubes\vspace{-30pt}}
\author{~}

\date{July 29, 2014}

%\author{Yaroslav Kopylov\,\MakeLowercase{$^{\text{a}}$} and Sven-Ake Wegner\,\MakeLowercase{$^{\text{b},\,1}$}}

\renewcommand{\thefootnote}{}
\hspace{-1000pt}\footnote{\hspace{5.5pt}2010 \emph{Mathematics Subject Classification}: Primary 05C12; Secondary 11H71, 51B20.}
\hspace{-1000pt}\footnote{\hspace{5.5pt}\emph{Key words and phrases}: hypercube, equilateral dimension.\vspace{1.6pt}}

\hspace{-1000pt}\footnote{$^{\text{a}}$\,Computer Science Division, University of California, Berkeley CA 94720-1776, USA\newline\phantom{x}\hspace{13pt}E-Mail: lorenz.minder@a3.epfl.ch\vspace{1pt}}

\hspace{-1000pt}\footnote{$^{\text{b}}$\,University of Cambridge, Computer Lab., JJ Thomson Avenue, Cambridge CB3 0FD, UK\newline\phantom{x}\hspace{13pt}E-Mail: thomas.sauerwald@cl.cam.ac.uk\vspace{1pt}}

\hspace{-1000pt}\footnote{$^{\text{c}}$\,Universit\"at Wuppertal, FB C--Mathematik, Gau\ss{}stra\ss{}e 20, D-42119 Wuppertal, GERMANY\newline\phantom{x}\hspace{13pt}E-Mail: wegner@math.uni-wuppertal.de}

% \linebreak\phantom{x}\hspace{13pt}

\begin{abstract}\vspace{25pt}
A subset of the finite dimensional hypercube is said to be equilateral if the distance of any two distinct points equals a fixed value. The equilateral dimension of the hypercube is defined as the maximal size of its equilateral subsets. We study asymptotic bounds on the latter quantity considered as a function of two variables, namely dimension and distance.
\end{abstract}

\maketitle

\begin{picture}(0,0)
\put(43.5,105){{\sc\MakeLowercase{Lorenz Minder}}\hspace{0.8pt}$^{\text{a}}$, {\sc\MakeLowercase{Thomas Sauerwald}}\hspace{0.8pt}$^{\text{b}}$ {\sc and} {\sc\MakeLowercase{Sven-Ake Wegner}}\hspace{0.8pt}$^{\text{c}}$}

\end{picture}

\vspace{-20pt}

\section{Introduction}\label{Intro}

The notion of equilateral sets, i.e., sets in which the distance of any two distinct points equals a fixed value, can be defined in broad generality -- that is in arbitrary metric spaces.
\vspace{3pt}
\\In 1983, Kusner \cite{Olla-Podrida} raised the question of determining equilateral sets in the finite dimensional $\ell^p$-spaces for $1\leqslant p\leqslant\infty$. He computed the so-called equilateral dimension, that is the maximal size of equilateral sets, of the Hilbert space $\ell^2(n)$ and of $\ell^{\infty}(n)$. For the other cases of $p$ he formulated conjectures which are very persuasive but turned out to constitute surprisingly hard and interdisciplinary problems. In particular, the case of $\ell^1(n)$ appears to be a fairly easy exercise at a first glance but is resisting a complete solution for more than thirty years now. However, during this time many results, using techniques from various areas of mathematics such as functional analysis, probability theory, combinatorics, approximation theory and algebraic topology, have been obtained by several authors, e.g., Alon, Pudl\'{a}k \cite{AlonPudlak}, Bandelt et al.~\cite{BandeltChepoiLaurent} and Koolen et al.~\cite{KoolenLaurentSchrijver}. We 
refer 
to the survey \cite{Swanepoel2004} of Swanepoel for detailed references, an overview of the state of the art concerning equilateral sets in normed spaces and historical comments.
\vspace{3pt}
\\A completely different area where equilateral sets can be studied are finite, undirected and connected graphs with the usual shortest-path metric. Being a structural invariant, it is a natural objective to compute the equilateral dimension of certain graph classes. In addition, equilateral sets might be of use in the context of information dissemination (similar to Feige et al.~\cite{FPRU90}), i.e., the problem of spreading a message held by a set of source nodes to a set of target nodes (broadcasting), by using an algorithm which regulates the communications in the neighborhood of any point. 
\vspace{3pt}
\\A third area in which equilateral sets occur naturally is the theory of codes. Equilateral subsets of the hypercube are nothing but equidistant codes in the language of coding theory (cf.~the books of MacWilliams, Sloane \cite{MacWilliamsSloane1977} or Huffman, Pless \cite{HuffmanPless2003}) and constitute a classical research topic with important applications for instance to error correcting codes. They are closely related to constant weight codes and to the theory of 2-designs. We refer to Bogdanova et al.~\cite{BogdanovaetAl} for detailed definitions and results explaining the connections between the latter objects and their theories. In addition we refer e.g.~to Brower et al.~\cite{BSSS1990}, Heng, Cooke \cite{HC98} or Fu et al.~\cite{FKLW03} for recent results on equidistant and constant weight codes.
\vspace{3pt}
\\In this article we study equilateral sets in the hypercube. After introducing our notation in Section \ref{Notation}, we perform an asymptotic analysis of the equilateral dimension in Sections \ref{sec:linear}--\ref{sec:poynomial}. In Section \ref{sec:linear}, we assume a linear relation between distance and dimension. In Section \ref{sec:constant} we consider a constant distance, but increasing dimensions and in the final Section \ref{sec:poynomial}, we study the case of a polynomial relation between the latter quantities. In all three sections we analyze the maximal asymptotic growth of the equilateral dimension.
\vspace{3pt}
\\Complementing prior work (performed in the language of codes or designs), which focused on computing the equilateral dimension for fixed distance and dimension exactly, our asymptotic analysis yields more intuitive and less technical results. It may be helpful for applications where dimension and distance are not fixed but coupled by a certain functional relation. Putting it differently, we relax the problem of constructing equilateral sets for fixed parameters which turns out to give new insights on their asymptotic distribution. In particular, we establish that for most reasonable functional relations between distance and dimension, the growth of the equilateral dimension of the $n$-dimensional hypercube is at least $n^{2/3}$.

\section{Notation}\label{Notation}

For $n\geqslant1$, the $n$-dimensional hypercube $Q_n$ is a finite, connected and undirected graph, which is defined as follows. Its vertices are all binary sequences $x=(x_i)_{i=1,\dots,n}$ of length $n$. Two vertices $x$ and $y$ are connected by an edge if the corresponding sequences differ in exactly one entry, i.e., if and only if there exists exactly one index $i$ such that $x_i\not= y_i$.
\vspace{-9pt}
\\We endow $Q_n$ with the shortest-path metric $\rho$ and note that the latter coincides with the so-called Hamming distance, i.e., $\rho(x,y)=|\{i\:;\:x_i\not=y_i\}|$.
\smallskip
\\Let $d$ be a positive integer. A non-empty subset $S\subseteq Q_n$ is said to be $d$-equilateral, if $\rho(x,y)=d$ for all $x,y\in S$ with $x\not=y$. By $e_d(n)\equiv e_d(Q_n)$ we denote the maximal size of $d$-equilateral sets in $Q_n$. $e_d(n)$ is referred to as the equilateral dimension of $Q_n$ with respect to the metric $\rho$.
\smallskip
\\Our goal is the asymptotic analysis of $e_d(n)$ considered as a function of two variables. Here, we let $n$ go to infinity and we may or may not assume the same for $d$. More precisely, we would like to know the value, if it exists, of $\lim_{k\rightarrow\infty}e_{d_k}(n_k)$ for any sequence $(d_k,n_k)_{k\in\mathbb{N}}\subseteq\mathbb{N}\times\mathbb{N}$ with $\lim_{k\rightarrow\infty}n_k=\infty$ and $d_k\leqslant n_k$ may be bounded or unbounded (note that $e_d(n)=1$ for $d>n$). 
Since solving this problem in its full generality is out of reach, we restrict ourselves to sequences of special types. In Section \ref{sec:linear}, we assume a linear relation between $n$ and $d$. In Section \ref{sec:constant} we consider a constant $d$, but increasing $n$ and in the final Section \ref{sec:poynomial}, we study the case of a polynomial relation between $n$ and $d$. However, many of the corresponding sequences might not converge, since for instance  $e_d(n)=2$, if $d\leqslant n$ is odd, as can be checked by straightforward computations. To avoid such \textquotedblleft{}pathologies\textquotedblright{} and in order to exhibit indeed the maximal growth of the equilateral dimension, we will use customized \textquotedblleft{}measure functions\textquotedblright{} that form a limit superior and quantify the growth in all of the three settings mentioned above.

\section{Asymptotics I: Linear Relation}\label{sec:linear}

We start our asymptotic analysis of $e_d(n)$ with the very natural case that the ratio of distance and dimension is fixed to some given value $\gamma\in\,]0,1]$, i.e., we study sequences with $d_k=\lfloor \gamma n_k \rfloor$. Note that for a given $\gamma$ all sequences $(\lfloor \gamma{}n_k \rfloor,\,n_k)_{k\in\mathbb{N}}$ are subsequences of $(\lfloor \gamma n \rfloor,\,n)_{n\in\mathbb{N}}$. For some aspects it is thus enough to study only the single sequence $(e_{\lfloor \gamma n \rfloor}(n))_{n\in\mathbb{N}}$.
\smallskip
\\The first question to ask is whether $(e_{\lfloor \gamma n \rfloor}(n))_{n\in\mathbb{N}}$ is bounded or not. Using the Plotkin bound (e.g., \cite[Theorem 2.2.1 and 2.2.4]{MacWilliamsSloane1977}) and the fact that $e_d(n)=2$, if $d\leqslant n$ is odd, $e_{\lfloor \gamma n \rfloor}(n)$ is at most $\max \{ 2, 2 \lfloor  \gamma/(2 \gamma - 1) \rfloor \} $ if $\gamma > 1/2$ and thus bounded. On the other hand, Theorem \ref{g-Prop} will show that for each $0 < \gamma \leqslant 1/2$ there exist sequences $(n_k)_{k \in \mathbb{N}}$ such that $(e_{\lfloor \gamma n_k \rfloor}(n_k))_{k \in \mathbb{N}}$ is unbounded; we define with foresight
$$
g(\gamma):=\limsup_{n\rightarrow\infty}{\textstyle\frac{e_{\lfloor \gamma n \rfloor}(n)}{n}} \; \text{ for } \: \gamma \in\;]0,1].
$$
Note that $g(\gamma) \leqslant 1$, since $e_{\lfloor \gamma n \rfloor}(n) \leqslant n+1$ which follows from Kusner's result (the equilateral dimension of $\ell^2(n)$ equals $n+1$, cf.~\cite{Olla-Podrida}) by embedding $Q_n$ into $\ell^2(n)$ or, for instance, by considering an equilateral set as a 2-design and applying Fisher's inequality, cf.~\cite[p.~63]{MacWilliamsSloane1977}. We have the following results on $g$.

\begin{thm}\label{g-Prop} Let $0<\gamma\leqslant1$ and $g$ be defined as above.
\begin{itemize}
\item[(1)] For $0<\gamma\leqslant1/2$ we have $2 \gamma \leqslant g(\gamma)\leqslant1$.
\item[(2)] For $1/2<\gamma\leqslant1$ we have $g(\gamma)=0$.
\end{itemize}
\end{thm}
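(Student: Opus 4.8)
The plan is to treat the two parts separately, part (2) being a short consequence of the Plotkin bound recalled above and part (1) resting on an explicit construction.

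For part (2), fix $\gamma \in \,]1/2,1]$. As soon as $n>2/(2\gamma-1)$ we have $2\lfloor\gamma n\rfloor-n\geqslant(2\gamma-1)n-2>0$, so any $\lfloor\gamma n\rfloor$-equilateral set in $Q_n$ is a binary equidistant code to which the Plotkin bound applies and yields
$$
e_{\lfloor\gamma n\rfloor}(n)\;\leqslant\;2\Big\lfloor\tfrac{\lfloor\gamma n\rfloor}{2\lfloor\gamma n\rfloor-n}\Big\rfloor\;\leqslant\;\frac{2\gamma n}{(2\gamma-1)n-2}\,.
$$
The right-hand side stays bounded as $n\to\infty$ (it tends to $2\gamma/(2\gamma-1)$), hence $e_{\lfloor\gamma n\rfloor}(n)/n\to0$ and $g(\gamma)=0$.

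For part (1), the bound $g(\gamma)\leqslant1$ is immediate from $e_{\lfloor\gamma n\rfloor}(n)\leqslant n+1$, recorded before the statement. For the lower bound $g(\gamma)\geqslant2\gamma$ I would produce, along a sequence $n_k\to\infty$, a $\lfloor\gamma n_k\rfloor$-equilateral set in $Q_{n_k}$ of size $2\gamma n_k-o(n_k)$. The building block is the Hadamard code: label the $2^k$ coordinates of $Q_{2^k}$ by the points $x\in\mathbb{F}_2^k$ and, for each $a\in\mathbb{F}_2^k$, set $v_a:=(a\cdot x)_{x\in\mathbb{F}_2^k}$. For $a\neq a'$ the functional $x\mapsto(a+a')\cdot x$ on $\mathbb{F}_2^k$ is nonzero, hence balanced, so $\rho(v_a,v_{a'})=2^{k-1}$; thus $\{v_a:a\in\mathbb{F}_2^k\}$ is a $2^{k-1}$-equilateral set of size $2^k$ in $Q_{2^k}$. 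Appending zero coordinates changes no pairwise distance, so $e_{2^{k-1}}(n)\geqslant2^k$ whenever $n\geqslant2^k$.

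It then remains to choose $n_k$ with $\lfloor\gamma n_k\rfloor=2^{k-1}$ and $n_k\geqslant2^k$, and this matching step is the one genuinely delicate point, since the construction delivers distance equal to half the length whereas the target distance must be exactly $\lfloor\gamma n_k\rfloor$. Put $n_k:=\lceil2^{k-1}/\gamma\rceil$; then $2^{k-1}/\gamma\leqslant n_k<2^{k-1}/\gamma+1$, so $2^{k-1}\leqslant\gamma n_k<2^{k-1}+\gamma\leqslant2^{k-1}+1$ and hence $\lfloor\gamma n_k\rfloor=2^{k-1}$, while $\gamma\leqslant1/2$ forces $n_k\geqslant2^{k-1}/\gamma\geqslant2^k$, so the zero-padding is admissible. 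Consequently
$$
\frac{e_{\lfloor\gamma n_k\rfloor}(n_k)}{n_k}=\frac{e_{2^{k-1}}(n_k)}{n_k}\;\geqslant\;\frac{2^k}{2^{k-1}/\gamma+1}=\frac{2\gamma}{1+2\gamma\cdot2^{-(k-1)}}\;\xrightarrow[\;k\to\infty\;]{}\;2\gamma,
$$
and since $n_k\to\infty$ this forces $g(\gamma)=\limsup_{n\to\infty}e_{\lfloor\gamma n\rfloor}(n)/n\geqslant2\gamma$. The matching works because the admissible interval $[\,2^{k-1}/\gamma,(2^{k-1}+1)/\gamma\,)$ has length $1/\gamma\geqslant2$ and therefore contains an integer; any infinite family of Hadamard orders tending to infinity would do in place of the powers of two, but Sylvester's construction keeps the argument self-contained, and everything else is routine.
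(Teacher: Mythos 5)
Your proof is correct and follows essentially the same route as the paper: part (2) via the Plotkin bound, and part (1) by taking $n_k=\lceil d_k/\gamma\rceil$ with $d_k$ a power of two, using an equidistant code of size $2d_k$ at distance $d_k$, and padding with zero coordinates. The only difference is that you prove the key input $e_{2^{k-1}}(2^k)\geqslant 2^k$ explicitly via the Sylvester--Hadamard (first-order Reed--Muller) construction, whereas the paper cites Bose--Shrikhande; note also the harmless arithmetic slip in your last display, where $\frac{2^k}{2^{k-1}/\gamma+1}$ equals $\frac{2\gamma}{1+\gamma\,2^{-(k-1)}}$ rather than $\frac{2\gamma}{1+2\gamma\,2^{-(k-1)}}$, which does not affect the limit.
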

\begin{proof}(1) Let $n_k:=\lceil 2^k / \gamma \rceil$. Then $\lfloor \gamma n_k \rfloor = 2^k$ for any $\gamma \leqslant 1/2$. Applying a result of Bose, Shrikhande \cite{BoeseShrikhande} (see \cite[Lemma 5]{BogdanovaetAl}) yields $e_{2^k}(n_k) \geqslant 2^{k+1} = 2 \lfloor \gamma n_k \rfloor$, which in turn implies $e_{\lfloor \gamma n_k \rfloor}(n_k) / n_k \geqslant (2 \lfloor \gamma n_k \rfloor) / n_k \geqslant 2 \gamma - 2 / n_k \geqslant 2 \gamma - 2/(2^k/\gamma) \geqslant 2 \gamma - 2^{-k}$. Therefore, $g(\gamma) \geqslant 2 \gamma$.
\smallskip
\\(2) As we observed above, $e_{\lfloor \gamma{}k \rfloor}(k)$ is bounded for $\gamma>1/2$, which immediately implies $g(\gamma)=0$ in this case.
\end{proof}

\section{Asymptotics II: Constant Dimension}\label{sec:constant}

Since the definition of $g$ in Section \ref{sec:linear} cannot be extended to $\gamma=0$, it is natural to consider sequences $(d_k,n_k)_{k \in \mathbb{N}}$ with $\lim_{k \rightarrow \infty} d_k / n_k=0$; the easiest sequences with this property are those with a constant dimension $d_k=d$. In order to measure the growth of $e_d(n)$ we define
$$
h(d):=\limsup_{n\rightarrow\infty}{\textstyle\frac{e_d(n)}{n}} \: \text{ for } \: d\in\mathbb{N}.
$$
Note that the latter is in fact the limit of the sequence $(e_d(n)/n)_{n\in\mathbb{N}}$ as the proof of Theorem \ref{constant-d-prop} will show. Therefore, a consideration of subsequences $(d,\,n_k)_{k\in\mathbb{N}}$ is dispensable. We obtain the following results on $h$ using results of Deza \cite{Deza1973} and van Lint \cite{vanLint1973}.

\begin{thm}\label{constant-d-prop} Let $d$ be a positive integer and $h$ be defined as above. If $d$ is odd, then $h(d)=0$. If $d$ is even, then $h(d)=2/d$.
\end{thm}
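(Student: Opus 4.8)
The plan is to translate the problem into extremal set theory and then invoke the classical classification of families of sets with constant pairwise intersection. First I would normalize a $d$-equilateral set $S\subseteq Q_n$: since $z\mapsto z\oplus x$ is an isometry of $Q_n$ for every $x$, I may assume $\mathbf{0}\in S$. Then every other point of $S$ has Hamming weight $d$, and for two such points with supports $A,B\subseteq\{1,\dots,n\}$ the identity $\rho(x,y)=|A|+|B|-2|A\cap B|=2d-2|A\cap B|$ shows that being at distance $d$ is equivalent to $|A\cap B|=d/2$. This immediately settles odd $d$: then $d/2\notin\mathbb{N}$, so $|S\setminus\{\mathbf{0}\}|\leqslant1$, hence $e_d(n)\leqslant2$ for every $n$ and $h(d)=0$ (the sequence $e_d(n)/n$ even tends to $0$). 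For even $d$ the correspondence is exact: $e_d(n)=1+N_d(n)$, where $N_d(n)$ is the largest size of a family of $d$-subsets of $\{1,\dots,n\}$ with all pairwise intersections equal to $d/2$.

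For the upper bound on $N_d(n)$ I would quote the structure theorem for families of sets with constant pairwise intersection (Deza; see also van Lint's work on equidistant codes): such a family of $d$-sets either has at most $d^2-d+1$ members, or is a sunflower, i.e.\ there is a fixed core $Y$ with $A\cap B=Y$ for all distinct $A,B$ in the family. In the sunflower case one necessarily has $|Y|=d/2$ and the petals $A\setminus Y$ are pairwise disjoint $(d/2)$-sets also disjoint from $Y$; counting ground-set elements yields $|Y|+(\text{number of petals})\cdot(d/2)\leqslant n$, so the family has at most $\lfloor 2n/d\rfloor-1$ members. Hence $e_d(n)\leqslant\max\{d^2-d+2,\ \lfloor 2n/d\rfloor\}$, which equals $\lfloor 2n/d\rfloor$ for all large $n$; in particular $\limsup_{n}e_d(n)/n\leqslant 2/d$.

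For the matching lower bound I would exhibit the obvious sunflower: fix $Y=\{1,\dots,d/2\}$, split a maximal portion of the remaining $n-d/2$ coordinates into blocks of size $d/2$ (there are $\lfloor(n-d/2)/(d/2)\rfloor=\lfloor 2n/d\rfloor-1$ of them), and take the sets $Y\cup(\text{block})$ together with $\mathbf{0}$. This gives $e_d(n)\geqslant\lfloor 2n/d\rfloor$, so in fact $e_d(n)=\lfloor 2n/d\rfloor$ for all sufficiently large $n$; thus $e_d(n)/n$ converges (not merely along a subsequence) to $2/d$, which is the asserted value of $h(d)$ and justifies the remark preceding the theorem.

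The only nontrivial ingredient, and the step I expect to be the crux, is the near-linear-in-$n$ upper bound on $N_d(n)$: a priori a constant-intersection family of $d$-sets could be large without being a sunflower, and ruling this out is exactly the content of Deza's theorem. Everything else — the parity obstruction for odd $d$, the element count in the sunflower case, and the explicit construction — is elementary bookkeeping.
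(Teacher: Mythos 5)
Your proof is correct and follows essentially the same route as the paper: both arguments rest on Deza's dichotomy (a constant-intersection family is either of bounded size or a sunflower), bound the sunflower/trivial case by $\lfloor 2n/d\rfloor$ via a disjointness count, and match it with the obvious sunflower construction, yielding $e_d(n)=\lfloor 2n/d\rfloor$ for large $n$. The only difference is cosmetic: you normalize so that $\mathbf{0}\in S$ and speak of sunflowers of $d$-sets with core of size $d/2$, whereas the paper stays in the equidistant-code picture and uses van Lint's equivalent notion of a ``trivial'' code.
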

\begin{proof} If $d$ is odd then $e_d(n)/n\leqslant 2/n\rightarrow0$ for $n\rightarrow\infty$ holds. Let $d$ be even. We claim that there exists $N$ such that $e_d(n)=\lfloor{}2n/d\rfloor{}$ for $n\geqslant{}N$ holds. We adopt the notation of \cite[Section 1]{vanLint1973}: Let us say that a $d$-equilateral subset of $Q_n$ of size $m$ is trivial, if in the associated $m\times{}n$-matrix each column has $m$ or $m-1$ equal entries. We choose $k$ such that $d=2k$ and we select $N$ such that $\lfloor{}N/k\rfloor{}>k^2+k+2$ holds. By \cite[Th\'{e}or\`{e}me 1.1.(ii)$'$ and (iii)]{Deza1973} for every $n \geqslant N$ all $2k$-equilateral subsets of $Q_n$ of size at least $\lfloor n/k \rfloor$ are trivial. Assume that there is a trivial $2k$-equilateral set of size $ \lfloor{}n/k\rfloor{}+1$. From the definition we infer that in every row exactly $k$ entries are equal to one. Since in each column at most one entry is one, we obtain a contradiction, as $k  (\lfloor{}n/k\rfloor{}+1) > k (n/k)=n$. 
A trivial $2k$-equilateral subset of $Q_n$ of size $\lfloor{}n/k\rfloor{}$ can be constructed in straightforward manner, which establishes the claim and implies $e_d(n)/n=\lfloor{}2n/d\rfloor{}/n$ and thus $2/d-1/n=(2n/d-1)/n\leqslant e_d(n)/n\leqslant 2/d$ for all $n$. Thus, $e_d(n)/n\rightarrow2/d$ for $n\rightarrow\infty$ follows.
\end{proof}

\section{Asymptotics III: Polynomial Relation}\label{sec:poynomial}

The results presented so far suggest to study the case where $d_k$ is not fixed, but also does not grow linearly in $n_k$. A natural choice to consider are sequences $(d_k,\,n_k)_{k\in\mathbb{N}}$ where $d_k = \lfloor{}n_k^{\alpha}\rfloor{}$ for $0<\alpha<1$.
\smallskip
\\In the setting of linear distances we obtained that the equilateral dimension either grows linearly, or is bounded. Thus, it is natural to presume that in the current setting $e_{n^{\alpha}}(n)$ may behave asymptotically like $n^{\beta}$ for some $\beta=\beta(\alpha)$. Therefore we define
$$
f(\alpha) := \limsup_{n \rightarrow \infty} \log_{n}(e_{\lfloor n^{\alpha} \rfloor}(n)) \: \text{ for } \alpha \in \;]0,1[. 
$$
Let us mention that this definition could be extended to the case $\alpha=0$, but then $f(0)=\limsup_{n \rightarrow \infty} \log_{n}(e_{1}(n))=0$ and we are again in the situation of a constant dimension. Moreover, for $\alpha=1$ we would also obtain $f(\alpha)=0$ and return to the case of a linear relation.
\smallskip
\\For the proof of our final Theorem \ref{f(alpha)-thm} we need the following result, which can be proved in an elementary way using ideas of Deza \cite{Deza1973} and van Lint \cite{vanLint1973}. For the sake of completeness we sketch the proof.

\begin{prop}\label{FF-5}
Let $q$ be any prime power and $k$ be an arbitrary integer. Then we have $e_{d(q,k)}(n(q,k)) \geqslant s(q,k)$, where $n(q,k) = q/(q-1) \cdot (1-q^{-2^{k+1}}) \cdot q^{2^{k+1}}$, $d(q,k) = 2 q ^{2^{k+1}-1}$ and $s(q,k) = q^{2^{k+1}}$.
\end{prop}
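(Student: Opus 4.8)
The plan is an explicit recursive construction of equidistant binary codes over $\mathbb{F}_q$, in which the number of codewords is squared at every step. Write $m := 2^{k+1}$, so that summing the geometric series gives $n(q,k) = \frac{q}{q-1}(q^m-1) = q + q^2 + \dots + q^m$, while $d(q,k) = 2q^{m-1}$ and $s(q,k) = q^m$; thus the assertion becomes: $Q_{q+q^2+\dots+q^m}$ contains a $2q^{m-1}$-equilateral subset of size $q^m$. Throughout I identify a $\delta$-equilateral subset of $Q_N$ of size $\sigma$ with a $\sigma\times N$ binary matrix whose rows are pairwise at Hamming distance $\delta$.

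The heart of the argument is a doubling step: if $\sigma$ is a prime power and $Q_N$ admits a $\delta$-equilateral subset of size $\sigma$, then $Q_{(1+\sigma)N}$ admits a $\sigma\delta$-equilateral subset of size $\sigma^2$. I would prove this by the following blow-up. Let $(r_a)_{a\in\mathbb{F}_\sigma}$ be the rows of the given code, index the rows of the new code by $\mathbb{F}_\sigma\times\mathbb{F}_\sigma$, and split the $(1+\sigma)N$ coordinates into a head block of length $N$ together with one block $B_c$ of length $N$ for each $c\in\mathbb{F}_\sigma$. In the row indexed by $(a,b)$ place $r_a$ in the head block and $r_{\,b+ac}$ in the block $B_c$, all subscripts computed in $\mathbb{F}_\sigma$.

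Verifying the distance is then a short three-case computation for two rows $(a,b)\neq(a',b')$. If $a=a'$, then $b\neq b'$, the head block contributes $0$, and each $B_c$ contributes $\delta$ (since $b+ac\neq b'+ac$), for a total of $\sigma\delta$. If $a\neq a'$, the head block contributes $\delta$; moreover $b+ac = b'+a'c$ holds for exactly one $c$, because $(a-a')c = b'-b$ has a unique solution in the field, so $\sigma-1$ of the blocks contribute $\delta$ and one contributes $0$, again totalling $\delta+(\sigma-1)\delta = \sigma\delta$. In particular the $\sigma^2$ rows are pairwise distinct, so this is a genuine equilateral set of the claimed size.

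It remains to supply the base case and iterate. The standard basis vectors $\{e_a : a\in\mathbb{F}_q\}\subseteq\{0,1\}^q$ form a $2$-equilateral subset of $Q_q$ of size $q$, matching the parameters $(N,\delta,\sigma) = (q,\,2q^{1-1},\,q^1)$ of the case $m=1$ (i.e. $k=-1$). Applying the doubling step $k+1$ times, after $j$ applications one has a code with $\sigma = q^{2^j}$ (still a prime power, so the step may be reapplied), $\delta = 2q^{2^j-1}$, and $N = q+q^2+\dots+q^{2^j}$; the recursions $\sigma\mapsto\sigma^2$, $\delta\mapsto\sigma\delta$, $N\mapsto(1+\sigma)N$ reproduce exactly these formulas, the last because $(1+q^{2^j})(q+\dots+q^{2^j}) = q+\dots+q^{2^{j+1}}$. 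Taking $j=k+1$ yields $(n(q,k),d(q,k),s(q,k))$, as required. The only genuinely non-routine point is choosing the twist $r_{b+ac}$ in the blow-up: making the blocks $B_c$ depend affinely on $c$ is what makes the three-case count come out constant, whereas the naive choice (blocks independent of $a$) gives the wrong distance; the arithmetic bookkeeping of the parameter recursions through the iteration I would not spell out in full.
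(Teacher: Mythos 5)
Your proof is correct and is essentially the paper's own argument: the same field-indexed block construction with the affine twist (the paper's row $r_{x_i-x_\mu x_\nu}$ is your $r_{b+ac}$ up to sign), the same three-case distance count, the same base case of standard basis vectors in $Q_q$, and the same iteration of the squaring step. No substantive difference.
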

\begin{proof} (1) We first show the following: Let $q_0$ be a prime power and assume that there is a $d_0$-equilateral subset $R$ of $Q_{n_0}$ of size $q_0$. Put $n=n_0(q_0+1)$, $d=d_0q_0$ and $q=q_0^2$. Then $e_d(n)\geqslant q$ holds.
\smallskip
\\Let in the sequel $i$, $j$, $\mu$ and $\nu$ denote non-negative integers. We enumerate the elements of $\mathbb{F}_{q_0}$ by $x_0,\dots,x_{q_0-1}$ and those of $R$ by $r_{x_0},\dots,r_{x_{q_0-1}}$. For $i=0,\dots,q_0-1$ we regard each $r_{x_i}$ as a row vector $[r_{x_i,0}\;\cdots\;r_{x_{i},n_0-1}]$. We define the following two matrices for $x\in\mathbb{F}_{q_0}$. $A_x$ is the $q_0\times{}n_0$-matrix whose rows are all copies of $r_x$ and $B_x$ is the matrix whose $i$-th row is equal to $r_{x_i-x}$ where we number the rows by $i=0,\dots,q_0-1$ and the columns by $j=0,\dots,n_0-1$. We define matrices

{\small
$$
A=\begin{bmatrix}
A_{x_0}      \\
A_{x_1}      \\
\vdots       \\
A_{x_{q_0-1}}\\
\end{bmatrix}
\;\;\text{\normalsize and \small}\;\;
B=\begin{bmatrix}
B_{x_0x_0}       & B_{x_0x_1}       & \cdots & B_{x_0x_{q_0-1}}      \\
B_{x_1x_0}       & B_{x_1x_1}       & \cdots & B_{x_1x_{q_0-1}}      \\
\vdots           & \vdots           &        & \vdots                \\
B_{x_{q_0-1}x_0} & B_{x_{q_0-1}x_1} & \cdots & B_{x_{q_0-1}x_{q_0-1}}\\
\end{bmatrix}
$$}

\noindent{}where we refer to $A_{x_{\mu}}$ as the $\mu$-th row-block of $A$ and to $[B_{x_{\mu}x_{0}}\,\cdots\,B_{x_{\mu}x_{q_0-1}}]$ as the $\mu$-th row-block of $B$. The row-blocks are enumerated by $\mu=0,\dots,q_0-1$ (for $A$ and $B$) and the columns of $B$ are enumerated by $\nu=0,\dots,q_0-1$. To end our construction, we put $S=[A\;B]$ which is a $q\times{}n$-matrix. The rows of $S$ form an $d$-equilateral subset of $Q_n$. In order to show the latter, let $s_{(\mu,i)}$ be the $i$-th row of the $\mu$-th row-block of $S$. By case analysis, it follows $\rho(s_{(\mu_1,i_1)},s_{(\mu_2,i_2)})=d_0q_0(1-\delta_{\mu_1,\mu_2}\delta_{i_1,i_2})$ which shows $\rho(s_{(\mu_1,i_1)},s_{(\mu_2,i_2)})=d_0q_0=d$ for $(\mu_1,i_1)\not=(\mu_2,i_2)$ as desired.
\smallskip
\\(2) As a direct consequence of (1) we get $e_{2q}(q(q+1))\geqslant q^2=n-q$ for each prime power $q$.
\smallskip
\\(3) Applying (1) to (2) we obtain an equilateral set whose size is again a prime power. Iterating (1) $k$ times yields the desired result.
\end{proof}

To complete our preparations for the proof of Theorem \ref{f(alpha)-thm}, we observe the following two simple inequalities which we will frequently use later on.

\begin{lem}\label{lem:simple} Let $d$, $d_1$, $d_2$ as well as $n$, $n_1$ and $n_2$ be positive integers.
\begin{itemize}
\item[(1)] We have $e_{d}(n+1) \geqslant e_{d}(n)$.
\item[(2)] We have $e_{d_1+d_2}(n_1+n_2) \geqslant\min\{ e_{d_1}(n_1),\,e_{d_2}(n_2)\}$.\qed
\end{itemize}
\end{lem}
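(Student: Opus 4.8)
The two inequalities in Lemma~\ref{lem:simple} are purely combinatorial manipulations of equilateral sets, so the plan is to exhibit the required sets explicitly and verify the distance condition by a short case analysis.

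For part (1) I would take a $d$-equilateral set $S\subseteq Q_n$ of maximal size $e_d(n)$ and simply append a fixed coordinate (say a $0$) to every vector, obtaining a set $S'\subseteq Q_{n+1}$. Since adding the same entry in the new coordinate to all vectors changes no pairwise Hamming distance, $S'$ is again $d$-equilateral, and $|S'|=|S|=e_d(n)$, whence $e_d(n+1)\geqslant e_d(n)$. Iterating this also recovers the monotonicity $e_d(m)\geqslant e_d(n)$ for all $m\geqslant n$, which is the form actually used later.

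For part (2) I would pick a $d_1$-equilateral set $S_1\subseteq Q_{n_1}$ and a $d_2$-equilateral set $S_2\subseteq Q_{n_2}$, each of size exactly $t:=\min\{e_{d_1}(n_1),e_{d_2}(n_2)\}$ (possible because any equilateral set contains equilateral subsets of every smaller size — one may just discard vectors). Fixing enumerations $S_1=\{u_1,\dots,u_t\}$ and $S_2=\{v_1,\dots,v_t\}$, I would form the concatenations $w_i:=(u_i,v_i)\in Q_{n_1+n_2}$ for $i=1,\dots,t$. For $i\neq j$ one has $\rho(w_i,w_j)=\rho(u_i,u_j)+\rho(v_i,v_j)=d_1+d_2$, so $\{w_1,\dots,w_t\}$ is a $(d_1+d_2)$-equilateral set of size $t$ in $Q_{n_1+n_2}$, giving $e_{d_1+d_2}(n_1+n_2)\geqslant t$.

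There is essentially no obstacle here; the only point requiring a line of justification is that an equilateral set of size $m$ yields equilateral sets of every size $m'\leqslant m$, which is immediate since any subset of a $d$-equilateral set is again $d$-equilateral. One should also note the harmless degenerate cases (a single vector is $d$-equilateral for every $d$, matching the convention $e_d(n)\geqslant1$), so the statements hold even when one of the quantities involved equals $1$.
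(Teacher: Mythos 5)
Your proof is correct and is precisely the argument the authors intend: the paper states this lemma with an immediate \qed, leaving the coordinate-padding construction for (1) and the blockwise concatenation of two equal-size equilateral sets for (2) as the obvious verification. Your additional remarks (subsets of equilateral sets are equilateral, distinctness of the concatenated vectors, the degenerate size-one case) are exactly the right points to check, and nothing is missing.
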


We now present our results on $f$ (cf. Figure 1 for an illustration).

\begin{thm}\label{f(alpha)-thm} Let $f$ be defined as above. If $\alpha \in\;]0,1/2]$ we have $f(\alpha) = \max \{ 1-\alpha,2 \alpha \}$. If $\alpha \in [1 - 2^{-k}, 1- 2^{-(k+1)}]$ holds for some integer $k \geqslant 1$ we have $f(\alpha) \geqslant \max \{ (1-\alpha)/2^{-k}, \alpha/(1 - 2^{-(k+1)}) \}$. In particular, $\min_{\alpha \in\:]0,1[} f(\alpha) = 2/3$ holds.
\end{thm}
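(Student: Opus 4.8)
\emph{Proof plan.} I would establish the three assertions in turn. Throughout put $d:=\lfloor n^{\alpha}\rfloor$; since $e_d(n)=2$ whenever $d$ is odd, the limit superior defining $f(\alpha)$ is governed by those $n$ for which $d$ is even, say $d=2k$, and it suffices to exhibit, for each target value, a sequence $n_j\to\infty$ realising it.

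\emph{The range $\alpha\in\,]0,1/2]$.} For $f(\alpha)\ge 1-\alpha$ I would use the trivial construction from the proof of Theorem~\ref{constant-d-prop}, which gives $e_{2k}(n)\ge\lfloor n/k\rfloor=\lfloor 2n/d\rfloor$; as the increments of $n^{\alpha}$ tend to $0$, $\lfloor n^{\alpha}\rfloor$ assumes every large even value, so $\log_n e_d(n)\ge 1-\alpha-o(1)$ for infinitely many $n$. For $f(\alpha)\ge 2\alpha$ I would take the $k=0$ case of Proposition~\ref{FF-5}, that is $e_{2q}(q(q+1))\ge q^{2}$ for every prime power $q$, and pad it by Lemma~\ref{lem:simple}(1) to $e_{2q}(n)\ge q^{2}$ for all $n\ge q(q+1)$. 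For $q$ a large prime power there is an integer $n$ with $\lfloor n^{\alpha}\rfloor=2q$ (the real interval $[(2q)^{1/\alpha},(2q+1)^{1/\alpha})$ has length tending to infinity), and any such $n$ obeys $n\ge(2q)^{1/\alpha}\ge(2q)^{2}\ge q(q+1)$ because $\alpha\le 1/2$; letting $q\to\infty$ gives $\log_n e_{2q}(n)\ge\log_n q^{2}\to 2\alpha$. Hence $f(\alpha)\ge\max\{1-\alpha,2\alpha\}$. For the matching upper bound I would invoke the Deza-type dichotomy already used in the proof of Theorem~\ref{constant-d-prop}: a $d$-equilateral set is either trivial, and then has at most $\lfloor 2n/d\rfloor$ points (the counting argument there), or non-trivial, and then has at most $O(d^{2})$ points, uniformly in $n$. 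Thus $e_d(n)\le\max\{\lfloor 2n/d\rfloor,\,O(d^{2})\}$, whence $\log_n e_d(n)\le\max\{1-\alpha,2\alpha\}+o(1)$ and $f(\alpha)=\max\{1-\alpha,2\alpha\}$.

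\emph{The range $\alpha\in[1-2^{-k},\,1-2^{-(k+1)}]$, $k\ge 1$.} Put $M:=2^{k+1}$ and $M':=2^{k}$, so the interval is $[1-1/M',\,1-1/M]$, and abbreviate Proposition~\ref{FF-5} with parameters $k$ and $k-1$ as $e_{2q^{M-1}}(n_M(q))\ge q^{M}$ and $e_{2q^{M'-1}}(n_{M'}(q))\ge q^{M'}$, where $n_M(q)=\tfrac{q(q^{M}-1)}{q-1}\le 2q^{M}$ and likewise for $M'$. The bound $f(\alpha)\ge\alpha/(1-2^{-(k+1)})$ comes from padding the first set: for $q$ a large prime power choose an integer $n$ with $\lfloor n^{\alpha}\rfloor=2q^{M-1}$; then $n\ge(2q^{M-1})^{1/\alpha}\ge 2q^{M}\ge n_M(q)$ because $\alpha\le 1-1/M$, so $e_{\lfloor n^{\alpha}\rfloor}(n)\ge q^{M}$ by Lemma~\ref{lem:simple}(1), and since $\log n=\tfrac{M-1}{\alpha}\log q+O(1)$ one gets $\log_n q^{M}\to\tfrac{M\alpha}{M-1}=\tfrac{\alpha}{1-2^{-(k+1)}}$. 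The bound $f(\alpha)\ge(1-\alpha)2^{k}$ comes from first scaling the second set: iterating Lemma~\ref{lem:simple}(2) with $t$ copies of it gives $e_{2tq^{M'-1}}(t\,n_{M'}(q))\ge q^{M'}$, hence $e_{2tq^{M'-1}}(n)\ge q^{M'}$ for $n\ge t\,n_{M'}(q)$; here I would take $q$ a large prime power and $t\asymp q^{1/(1-\alpha)-M'}$ (the exponent is nonnegative precisely because $\alpha\ge 1-1/M'$), arranged so that $2tq^{M'-1}=\lfloor n^{\alpha}\rfloor$ for a suitable $n$, which then satisfies $n\asymp q^{1/(1-\alpha)}$ together with $n\ge t\,n_{M'}(q)$, so that $\log_n q^{M'}\to M'(1-\alpha)=(1-\alpha)2^{k}$. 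Together these two estimates yield the asserted lower bound on the interval.

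\emph{The minimum, and the main difficulty.} It remains an elementary optimization. On $]0,1/2]$ we have $f(\alpha)=\max\{1-\alpha,2\alpha\}\ge 2/3$, with equality only at $\alpha=1/3$, where $1-\alpha=2\alpha=2/3$. On $]1/2,1[$ every $\alpha$ lies in $[1-2^{-k},1-2^{-(k+1)}]$ for a unique $k\ge 1$; the functions $\alpha\mapsto(1-\alpha)2^{k}$ and $\alpha\mapsto\alpha/(1-2^{-(k+1)})$ are, respectively, decreasing and increasing, they agree at $\alpha^{\ast}=\tfrac{2^{k+1}-1}{2^{k+1}+1}$, which lies in the interval, and there both equal $\tfrac{2^{k+1}}{2^{k+1}+1}\ge\tfrac45>\tfrac23$; hence $f(\alpha)>2/3$ on $]1/2,1[$. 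Therefore $\min_{\alpha\in\,]0,1[}f(\alpha)=f(1/3)=2/3$. The step I expect to be delicate is the bookkeeping in the second part: the pairwise distance must be made \emph{exactly} $\lfloor n^{\alpha}\rfloor$ (not merely of the right order) along a subsequence, which is what forces the interval and divisibility arguments, and in the scaled construction the prime power $q$ and the multiplicity $t$ must be coupled with precisely the right exponents; any residual mismatch can be repaired by one further application of Lemma~\ref{lem:simple}(2) against a trivial equilateral set, at a cost of only $o(1)$ in the exponent. The one external ingredient is the Deza-type bound invoked in the first part, which is exactly the structural result already cited in the proof of Theorem~\ref{constant-d-prop}.
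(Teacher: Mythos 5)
Your proposal is correct and follows essentially the same route as the paper: the Deza trivial/non-trivial dichotomy for the upper bound on $]0,1/2]$, Proposition~\ref{FF-5} plus padding via Lemma~\ref{lem:simple}(1) for the bounds $2\alpha$ and $\alpha/(1-2^{-(k+1)})$, concatenation of copies via Lemma~\ref{lem:simple}(2) for the bound $(1-\alpha)2^{k}$, and the same elementary optimization at the end. The only cosmetic difference is that you select $n$ with $\lfloor n^{\alpha}\rfloor=d$ by an interval-length argument, whereas the paper takes $m_q=\lceil d^{1/\alpha}\rceil$ and verifies $\lfloor m_q^{\alpha}\rfloor=d$ directly.
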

\begin{proof} Consider first the case $0 < \alpha \leqslant 1/2$. From \cite[Th\'{e}or\`{e}me 1.1]{Deza1973} we get the estimate $t_d(n)\leqslant{}e_d(n)\leqslant{}\max\{(d/2)^2+d/2+2,\,t_d(n)\}$, where $t_d(n)$ is the cardinality of the largest trivial equilateral set, see the proof of Theorem \ref{constant-d-prop}. There we showed that $t_d(n)=\lfloor{}2n/d\rfloor{}$ holds for $2n\geqslant{}d$ and $d$ even. Put $d=\lfloor n^{\alpha} \rfloor$ to get $1/C\cdot{}n^{1-\alpha}\leqslant e_{\lfloor n^{\alpha} \rfloor}(n)\leqslant C  \max\{ n^{2\alpha},\,n^{1-\alpha}\}$ for some $C\geqslant1$; taking logarithms yields $(1-\alpha) - \log_{n} C  \leqslant \log_n e_{\lfloor n^{\alpha} \rfloor}(n)\leqslant\max\{2\alpha,\,1-\alpha\} + \log_{n} C$ for large $n$ where $\lfloor n^{\alpha} \rfloor$ is even (since $\alpha<1$ there are infinitely many $n$ for which this holds). Thus $1-\alpha\!\leqslant{}\!f(\alpha)\leqslant\!\max\{2\alpha,\,1-\alpha\}$.
\smallskip
\\Proposition \ref{FF-5} with $k=0$ yields $e_{2q}(q(q+1))\geqslant q^2$ for any prime power $q$. Using Lemma \ref{lem:simple}.(1) we infer that $e_{2q}(4q^2)\geqslant q^2$ holds for any prime power $q$. Denote by $\mathcal{Q}$ the set of all prime powers and consider the sequence $(n_q)_{q\in\mathcal{Q}}$ defined by $n_q:=(2q)^{1/\alpha}$, i.e., $2q=n_q^\alpha$, $4q^2=n_q^{2\alpha}$ and $q^2=n_q^{2\alpha}/4$. We get $e_{n_q^\alpha}(n_q^{2\alpha})\geqslant n_q^{2\alpha}/4$ for all $q\in\mathcal{Q}$ and put $m_q:=\lceil n_q \rceil$. Then $e_{n_q^{\alpha}}(m_q) \geqslant e_{n_q^{\alpha}}(n_q^{2 \alpha}) \geqslant n_q^{2 \alpha}/4$ follows from Lemma \ref{lem:simple}.(1). We claim that $n_q^{\alpha}= \lfloor m_q^{\alpha} \rfloor = \lfloor \lceil n_q \rceil^{\alpha} \rfloor$. This is obvious if $n_q$ is an integer, since $n_q^{\alpha}$ is also an integer. If $n_q$ is not an integer, we find $0 < \varepsilon < 1$ such that $\lfloor \lceil n_q \rceil^{\alpha} \rfloor = \lfloor (n_q+\varepsilon) ^{\alpha} \rfloor
\leqslant \lfloor n_q^{\alpha} + \varepsilon^{\alpha} \rfloor    = n_q^{\alpha}$, where the last equality follows from $\varepsilon^{\alpha} < 1$. On the other hand, we have $\lceil n_q \rceil^{\alpha} \geqslant n_q^{\alpha}$ and since $n_q^{\alpha}$ is an integer, we obtain $ \lfloor \lceil n_q \rceil^{\alpha}  \rfloor \geqslant n_q^{\alpha}$. This proves $e_{\lfloor m_q^{\alpha} \rfloor}(m_q) \geqslant 1/C \cdot m_q^{2 \alpha}$ for sufficiently large $q$, where $C \geqslant 1$. Taking logarithms, we obtain $f(\alpha) \geqslant 2 \alpha$ and recalling our previous bounds on $f(\alpha)$ we conclude $f(\alpha)=\max(2\alpha,1-\alpha)$.
\smallskip
\\Before we investigate the case $1/2 < \alpha < 1$, we study $f(\beta)$ where $\beta=1-2^{-(k+1)}$ for some integer $k\geqslant1$. Using Proposition \ref{FF-5}  we obtain $(n_q)_{q\in\mathcal{Q}}$ with $n_q \equiv n(q,k) \geqslant  q^{2^{k+1}}$ for sufficiently large $q$ and $ d_q \equiv d(q,k)  = 2 q^{2^{k+1}-1}$ such that $e_{d_q}(n_q) \geqslant s_q$ with $s_q \equiv s(q,k) = q^{2^{k+1}}$. We also have that $n_q^{\beta}=n_q^{1-2^{-(k+1)}}\leqslant (3/2)^{1-2^{-(k+1)}} q^{2^{k+1}-1} \leqslant 3/4 \cdot d(q,k) \leqslant d(q,k) \equiv d_q$. We put $m_q := \lceil d_q^{1/\beta} \rceil$, which along with $m_q \geqslant n_q$ implies $e_{d_q}(m_q) \geqslant e_{d_q}(n_q) \geqslant s_q \geqslant 2/3\cdot n_q \geqslant 1/C\cdot d_q^{1/\beta} \geqslant 1/C\cdot m_q$. Moreover, $\lfloor m_q^{\beta} \rfloor = \lfloor \lceil d_q^{1/\beta} \rceil^{\beta} \rfloor = d_q$, which is shown exactly as in the last paragraph. Taking logarithms gives $f(\beta)=1$.
\smallskip
\\Let $\alpha\in\,]1-2^{-k},1-2^{-(k+1)}[$. Recall that for $\beta=1-2^{-k}$, there exists a sequence $(n_q)_{q\in\mathcal{Q}}$ and a constant $C \geqslant 1$ with $e_{d_q}(n_q) \geqslant 1/C  \cdot n_q$ with $d_q=\lfloor n_q^{\beta} \rfloor$, provided that $q$ is sufficiently large. Define $d_q' := d_q \cdot 2 \lfloor n_q^{(\beta-\alpha)/(\alpha-1)   } \rfloor$ and $n_q' := n_q \cdot 2 \lfloor n_q^{(\beta-\alpha)/(\alpha-1)   } \rfloor$. We prove that $(n_q')^{\alpha} \leqslant d_q'$. Since $\alpha < 1$ is fixed, we can choose $q$ large enough such that $2 \cdot \lfloor n_q^{\beta} \rfloor \cdot \lfloor n_q^{(\beta-\alpha)/(\alpha-1)} \rfloor \geqslant 2^{\alpha} \cdot n_q^{\beta} \cdot n_q^{(\beta-\alpha)/(\alpha-1)}$. We estimate
$$
(n_q \cdot 2 \lfloor n_q^{\frac{\beta-\alpha}{\alpha-1}} \rfloor)^{\alpha}  \leqslant n_q^{\alpha} \cdot 2^{\alpha} \cdot n_q^{\alpha\frac{\beta-\alpha}{\alpha-1}} = 2^{\alpha}\cdot n_q^{\beta} \cdot n_q^{\frac{\beta-\alpha}{\alpha-1}}   \leqslant 2 \cdot \lfloor n_q^{\beta}\rfloor\cdot\lfloor n_q^{\frac{\beta-\alpha}{\alpha-1}}\rfloor = d_q'
$$
and apply Lemma \ref{lem:simple}.(2) to obtain $e_{d_q'}(n_q') \geqslant e_{d_q}(n_q) \geqslant 1/C \cdot n_q$. We put $m_q := \lceil (d_q')^{1/\alpha} \rceil$. Then $d_q' = \lfloor m_q^{\alpha} \rfloor = \lfloor  \lceil (d_q')^{1/\alpha} \rceil^{\alpha} \rfloor$ holds, which can be shown as above. Since $m_q \geqslant n_q$, we may apply Lemma \ref{lem:simple}.(1) to obtain $e_{\lfloor m_q^{\alpha}\rfloor}(m_q) \geqslant 1/C \cdot n_q$. Moreover,
$$
m_q = \lceil (d_q')^{1/\alpha} \rceil  \leqslant 2 ( d_q \cdot 2 \lfloor n^{\frac{\beta-\alpha}{\alpha-1}} \rfloor)^{1/\alpha}\leqslant 2^{1+1/\alpha} \cdot n_q^{\frac{\beta(\alpha-1)+\beta-\alpha}{\alpha(\alpha-1)}},
$$
which implies $n_q \geqslant 1/C \cdot m_q^{\alpha(\alpha-1)/(\beta(\alpha-1)+\beta-\alpha)}=1/C\cdot m_q^{(\alpha-1)/(\beta-1)}$ from which we conclude $e_{\lfloor m_q^{\alpha}\rfloor}(m_q) \geqslant 1/C \cdot m_q^{(\alpha-1)/(\beta-1)}$. Taking logarithms we obtain $f(\beta) \geqslant (\alpha-1)/(\beta-1)$.
\smallskip
\\For the same $\alpha$, we put $\beta=1-2^{-(k+1)}$. Again, there exists a sequence $(n_q)_{q\in\mathcal{Q}}$ and a constant $C \geqslant 1$ such that $e_{d_q}(n_q) \geqslant 1/C \cdot n_q$ holds for sufficiently large $q$ where $d_q=\lfloor n_q^{\beta} \rfloor$.
Define $m_q:=\lceil \lfloor n_q^{\beta} \rfloor^{1/\alpha} \rceil$ to get $d_q = \lfloor m_q^{\alpha} \rfloor=\lfloor \lceil d_q^{1/\alpha} \rceil^{\alpha} \rfloor$ with the same arguments as above. Moreover, $m_q =  \lceil \lfloor n_q^{\beta} \rfloor^{1/\alpha} \rceil \geqslant \lceil 2^{-1/\alpha} n_q^{\beta/\alpha} \rceil \geqslant n_q$ for sufficiently large $q$, since $\beta > \alpha$. Hence we may apply Lemma \ref{lem:simple}.(1) a last time to obtain $e_{d_q}(m_q) \geqslant e_{d_q}(n_q) \geqslant 1/C \cdot m_q^{\alpha/\beta}$. Taking logarithms, we arrive at $f(\alpha) \geqslant \alpha / \beta$. Therefore, $f(\alpha)\geqslant \max((1-\alpha)/ 2^{-k},\alpha/(1-2^{-(k+1)}))$ holds whenever $\alpha\in [1-2^{-k},1-2^{-(k+1)}]$ for some integer $k\geqslant1$.
\smallskip
\\From the formulas resp.~estimates which we just established it follows directly that $\min_{\alpha\in\,]0,1/2]} f(\alpha)=f(1/3)=2/3$ and $\min_{\alpha\in[1-2^{-k},1-2^{-(k+1)}]} f(\alpha)=f((2^{k+1}-1)/(2^{k+1}+1))=2^{k+1}/(2^{k+1}+1)$ holds for any $k\geqslant1$. Since $(2^{k+1})/(2^{k+1}+1) \geqslant 2/3$ for every $k \geqslant 1$, we obtain $\min_{\alpha \in\;]0,1[} f(\alpha) = 2/3$.
\end{proof}

A simple inspection of the first part of the proof of Theorem \ref{f(alpha)-thm} shows that also for
more slowly growing functions, e.g., $d(n)=\lfloor \log n \rfloor$, we have $\limsup_{n \rightarrow \infty} \log_{n}(e_{\lfloor d(n) \rfloor}(n)) \geqslant \limsup_{n \rightarrow \infty} \log_{n}(\lfloor 2n/\lfloor \log n \rfloor \rfloor) = 1$.

\vspace{8pt}

\begin{center}
\footnotesize{
\begin{tikzpicture}[auto,domain=0:4,x=1cm,y=1cm] 
\pgftransformxscale{1.01}
\pgftransformyscale{.71}
\draw[red!60!white](0,10) -- (3.33,6.66);
\draw[red!60!white](3.33,6.66) -- (5,10);
\filldraw[red!60!white, fill=red!60!white] (5,10) -- (6,8) -- (7.5,10) -- (7.777,8.888) -- (8.75,10) -- (8.823,9.412) -- (9.375,10) --(5,10);
\draw[red!60!white, thick, dotted] (9.525,9.7) -- (9.85,9.7);

%Achsen
\draw[->] (-0.1,5) -- (10.4,5) node[right] {$\alpha$}; 
\draw[->] (0,6.366) -- (0,10.4) node[above] {$f(\alpha)$}; 
\draw[dashed] (0,5.5) -- (0,6.266);
\draw (0,5.1) -- (0,5.33);
\draw(0,4.5) node{$0$};

%Hilfslinien von unten nach oben
\draw[help lines] (3.333,6.366) -- (3.333,10);
\draw[help lines] (5,6.366) -- (5,10);
\draw[help lines] (6,6.366) -- (6,10);
\draw[help lines] (7.5,6.366) -- (7.5,10);
\draw[help lines] (7.777,6.366) -- (7.777,10);
\draw[help lines] (8.75,6.366) -- (8.75,10);
\draw[help lines] (8.823,6.366) -- (8.823,10);
\draw[help lines] (9.375,6.366) -- (9.375,10);
\draw[help lines] (10,6.366) -- (10,10);

%gepunktete Hilfslinien
\draw[help lines, dashed] (3.333,5.5) -- (3.333,6.266);
\draw[help lines, dashed] (5,5.5) -- (5,6.266);
\draw[help lines, dashed] (6,5.5) -- (6,6.266);
\draw[help lines, dashed] (7.5,5.5) -- (7.5,6.266);
\draw[help lines, dashed] (7.777,5.5) -- (7.777,6.266);
\draw[help lines, dashed] (8.75,5.5) -- (8.75,6.266);
\draw[help lines, dashed] (8.823,5.5) -- (8.823,6.266);
\draw[help lines, dashed] (9.375,5.5) -- (9.375,6.266);
\draw[help lines, dashed] (10,5.5) -- (10,6.266);
\draw[help lines] (3.333,5.1) -- (3.333,5.33);
\draw[help lines] (5,5.1) -- (5,5.33);
\draw[help lines] (6,5.1) -- (6,5.33);
\draw[help lines] (7.5,5.1) -- (7.5,5.33);
\draw[help lines] (7.777,5.1) -- (7.777,5.33);
\draw[help lines] (8.75,5.1) -- (8.75,5.33);
\draw[help lines] (8.823,5.1) -- (8.823,5.33);
\draw[help lines] (9.375,5.1) -- (9.375,5.33);
\draw[help lines] (10,5.1) -- (10,5.33);

%Hilfslinien von rechts nach links
\draw[help lines] (0,6.666) -- (10,6.666);
\draw[help lines] (0,8) -- (10,8);
\draw[help lines] (0,8.888) -- (10,8.888);
\draw[help lines] (0,9.412) -- (10,9.412);
\draw[help lines] (0,10) -- (10,10);

% Beschriftungen an der x-Achse
\draw(0,4.9) -- (0,5.1);
\draw(3.333,4.9) -- (3.333,5.1);
\draw(5,4.9) -- (5,5.1);
\draw(6,4.9) -- (6,5.1);
\draw(7.5,4.9) -- (7.5,5.1);
\draw(7.777,4.9) -- (7.777,5.1);
\draw(8.75,4.9) -- (8.75,5.1);
\draw(8.823,4.9) -- (8.823,5.1);
\draw(9.375,4.9) -- (9.375,5.1);
\draw(10,4.9) -- (10,5.1);
\draw(3.333,4.5) node{$\frac{1}{3}$};
\draw(5,4.5) node{$\frac{1}{2}$};
\draw(6,4.5) node{$\frac{3}{5}$};
\draw(7.5,4.5) node{$\frac{3}{4}$};
\draw(7.777,4.5) node{$\frac{7}{9}$};
\draw(8.63,4.5) node{$\frac{7}{8}$};\draw(8.93,4.5) node{$\frac{15}{17}$};
\draw(9.375,4.5) node{$\frac{15}{16}$};
\draw[dotted, thick] (9.525,4.5) -- (9.85,4.5);
\draw(10,4.5) node{$1$};

% Beschriftungen an der y-Achse
\draw(-0.1,6.666) -- (0.1,6.666);
\draw(-0.1,8.0) -- (0.1,8.0);
\draw(-0.1,8.888) -- (0.1,8.888);
\draw(-0.1,9.412) -- (0.1,9.412);
\draw(-0.1,10) -- (0.1,10);
\draw(-0.3,5) node{$0$};
\draw(-0.3,6.666) node{$\frac{2}{3}$};
\draw(-0.3,8.0) node{$\frac{4}{5}$};
\draw(-0.3,8.888) node{$\frac{8}{9}$};
\draw(-0.3,9.412) node{$\frac{16}{17}$};
\draw(-0.3,10) node{$1$};
\end{tikzpicture}}
\smallskip
\\\parbox[h]{10.5cm}{\footnotesize Figure 1: Possible values of $f(\alpha)$ for $0 < \alpha < 1$. Any value of $f(\alpha)$ must be on the line (if $0 < \alpha < 1/2$) or in the filled area (if $1/2 < \alpha < 1$). As established in Theorem \ref{f(alpha)-thm}, $e_d(n)$ grows asymptotically at least as fast as $n^{2/3}$ and the minimal asymptotic growth is matched exactly for $d=n^{1/3}$.}
\end{center}

%\bibliographystyle{spmpsci} 
%\bibliography{bib}  

\vspace{10pt}

\end{document}